\newcommand{\arxiv}[1]{\href{http://arxiv.org/abs/#1}{\texttt{arXiv:#1}}}
\theoremstyle{plain}
\newtheorem{theorem}{Theorem}[section]
\newtheorem{lemma}[theorem]{Lemma}
\newtheorem{proposition}[theorem]{Proposition}
\newtheorem{fact}[theorem]{Fact}
\theoremstyle{definition}
\theoremstyle{remark}
\newtheorem*{remark}{Remark}
\newcommand{\be}{\begin{equation}}
\newcommand{\ee}{\end{equation}}
\newcommand{\bea}{\begin{eqnarray}}
\newcommand{\eea}{\end{eqnarray}}
\newcommand{\bfa}{\begin{fact}}
	\newcommand{\efa}{\end{fact}}
\newcommand{\bin}{\begin{inequality}}
	\newcommand{\ein}{\end{inequality}}
\def \l {{\lambda}}
\def \n {\bar{n}}
\def \k {{\kappa}}
\def \r {{\rho}}
\def \p {{\mathbb{P}}}
\title{\bf On the combinatorics of exclusion \\in Haldane fractional statistics}
\author{\textsc{Nour-Eddine Fahssi}\\
	\small \textit{Lab. High Energy Physics, Modeling and Simulation}, \\ [-0.3em]
	\small \textit{FS, Mohammed V University of Rabat.}\\[-0.3em]
	\small and\\ [-0.3em]
	\small \textit{Lab. M2CAN, Dept of Mathematics, Hassan Second }\\ [-0.3em] 
	\small \textit{University of Casablanca, FST Mohammedia, Morocco.\thanks{\emph{Permanent address.}}} \\
	\small \href{mailto:n.fahssi@live.fr}{\tt n.fahssi@live.fr}}
\date{}
\begin{document}
	\maketitle
\begin{abstract}
This paper is a revision of the combinatorics of fractional exclusion statistics (FES). More specifically, the following exact statement of the generalized Pauli principle is derived: for an $N$-particles system exhibiting FES of extended parameter \mbox{$g=q/r$} ($q$ and $r$ are co-prime integers such that $0 < q \leq r$), we found that the allowed occupation number of a state is smaller than or equal to $r-q+1$ and \emph{not} to $1/g$ whenever $q\neq 1$ and, moreover, the global occupancy shape (merely represented by a partition of $N$) is admissible if the number of states occupied by at least two particles is less than or equal to $(N-1)/r$ ($N \equiv 1 \pmod r$). These counting rules allow distinguishing infinitely many families of FES systems depending on the parameter $g$ and the size $N$. As an application of the main result, we study the probability distributions of occupancy configurations. For instance, the number of occupied states is found to be a hypergeometric random variable. Closed-form expressions for the expectation values and variances in the thermodynamic limit are presented. By way of comparison, we obtain parallel results regarding the Gentile intermediate statistics and demonstrate subtle similarities and contrasts with respect to FES.
\\

\hfill{PACS number(s): 05.30.Pr, 02.10.Ox}
\end{abstract}\newpage
\tableofcontents

\section{Introduction}\label{sec:introduction}

Fractional exclusion statistics (FES) is an archetype of unconventional statistics. Since it was introduced by Haldane (1991) to explain the properties of quasi-particles in the fractional quantum Hall effect~\cite{Hald}, FES has been a subject of intense research and has found applications in numerous models of interacting particles. Nowadays, the literature on the topic is voluminous; we refer, e.g., to papers~\cite{ref2,ref3,ref4,ref5,ref6,anghel,anghel2,CS,Hu} and references cited therein.

Generally, a FES system consists of a countable number of species of particles; each species consists of a finite number of single-particle states. Haldane's proposal is based on a generalization of the Pauli principle. Explicitly, in the case with only one species, an $N$th (quasi-)particle added to a system of identical particles can occupy $d_N=K-g(N-1)$ single-particle states, where $K$ is the number of available states when $N = 1$ and the constant $g$ is a parameter of the ``statistical interaction''. The number $d_{N}$ represents the dimension of the one-particle Hilbert space obtained by keeping the quantum numbers of the $N-1$ other particles fixed. Naturally, the conventional Bose-Einstein (BE) and Fermi-Dirac (FD) statistics are recovered for $g = 0$ (no exclusion) and $g = 1$ (perfect Pauli exclusion), respectively. In these notes, FES with parameter $g$ will be referred to as FES$_g$.

The total size of the full Hilbert space of many-particle states for FES systems is postulated to be~\cite{Hald, ref3} \be \label{HW} W_{g}(K,N)= \binom{d_{N}+N-1}{N},\ee where ${a \choose b}= a!/(b! (a-b)!)$ is a binomial coefficient. As mentioned by Wu, the statistical weight~\eqref{HW} is a generalization of Yang-Yang state counting~\cite{yang}. The thermodynamic properties of FES gazes were widely studied, primarily by Wu~\cite{ref3} and Isakov~\cite{ref4}. For instance, the average occupation number is found to be
\[ \n_g(\epsilon)=\dfrac{1}{f(\xi)+g}  < \frac{1}{g}, \] where $\xi =e^{\beta (\epsilon-\mu)}$, $\epsilon$ is the single particle energy, $\beta$ the inverse temperature, $\mu$ the chemical potential of the system and the function $f(\xi)$ satisfies the functional equation  $f^g (1+f)^{1-g}= \xi$.

Clearly, to have a combinatorial meaning, the number of particles $N$ has to be congruent to $1 \pmod r$ so that the dimension $d_N$, and accordingly $W_g(K,N)$, is a whole number. Thus, if $N = r P +1$ for some integer $P$, then $d_{N+r}-d_N=-q$, viz. adding $r$ particles reduces the number of available states by $q$. The number of quantum states~\eqref{HW} takes now the form \be \label{HWbis} W_g(K,0)=1, \quad \hbox{and}  \quad W_g(K,N) ={K+(r-q)P \choose r P+1}. \tag{$1'$}\ee Note that $W_g(K,N)=0$ if $P > (K-1)/q$.

In Ref.~\cite{Poly}, Polychronakos proposed an extensive model which accurately gives back the statistical mechanics of FES in the thermodynamic limit. Extensivity (or multiplicativity) here means that, for large $K$, the grand partition function is the $K$th power of a $K$-independent function~\cite{Poly}. However, the price paid for this microscopic realization is the occurrence of negative probabilities; see also~\cite{Nayak}. Now, it is understood that this problem occurs because Haldane statistics is not extensive and, unlike the Pauli principle, the exclusion operates on sets of levels~\cite{ref6}. Chaturvedi and Srinivasan~\cite{chat}, and subsequently Murthy and Shankar~\cite{neg}, showed how negative weights may be avoided for $g=1/2$ (semions) and for $g=1/3$, and indicated -- without being explicit -- that ``there is an algorithm to derive  single-particle occupation probabilities for arbitrary $g=1/m$ though this gets complicated for larger $m$''~\cite{neg}. In this letter, we revisit and solve this problem in a closed form when the parameter $g$ is generally any irreducible fraction: $g=q/r$, where $q$ and $r$ are coprime and $0 < q \leq r$. Moreover, while doing this, we revise and generalize the exclusion rules of FES. Our approach is purely combinatorial; it leads to the following exclusion principle: \emph{An occupancy configuration is allowed if} (1) \emph{the maximal number of particles that each state can accommodate is $r-q+1$, and \emph{not} to $g^{-1}$ whenever $q\neq 1$}, and (2) \emph{the configurations in which the number of states occupied by two or more particles is greater than $(N-1)/r$ are forbidden}. This allows us to distinguish infinitely many families of FES$_g$ systems depending on $g$ and $N$.

In Section~\ref{s2}, we state our main result (Theorem~\ref{HWweight}) and interpret its combinatorial consequences. Section~\ref{s3} deals with an application to the statistics of occupancy configurations. By way of comparison with other exotic models, we derive similar results for the Gentile intermediate statistics. We end with some concluding remarks in Section~\ref{s4}.
\section{The Exact combinatorics}\label{s2}
In order to state our main result, we need some background on the theory of partitions. A \emph{partition} of a non-negative integer $N$ is a non-increasing sequence of positive integers whose sum is $N$. To indicate that $\l$ is a partition of $N$, we write $\l \vdash N$ and denote \mbox{$\l=(1^{k_1} 2^{k_2} \ldots N^{k_N})$}, where $\sum_{i=1}^N i k_i = N$ and $k_i$ designates the multiplicity of the part $i$; the sum $\ell(\l)=\sum_{i=1}^N k_i$ is called the \emph{length} of $\l$. The \emph{Ferrers diagram} of $\l$ is a pattern of dots, with the $j$th row having the same number of dots as the $j$th term in $\l$.

Suppose we have $N$ indistinguishable balls (particles) randomly distributed into $K$ labeled boxes (states). An \emph{occupancy configuration} is said to be of shape $\l=(1^{k_1} 2^{k_2} \ldots N^{k_N}) \vdash N$ if $k_i$ states are occupied by $i$ particles ($i=1,\ldots, N$) and the number of non-vacant states $\ell(\l)$ is less than or equal to $K$. Moreover, if no parts of $\l$ exceed a fixed integer $m$, the corresponding configuration is additionally characterized by $\ell(\l^*) \leq m$, where $\l^*$ stands for the \emph{conjugate} partition  of $\l$, that is, the partition whose Ferrers diagram is obtained from $\l$ by reflection with respect to the diagonal so that rows become columns and columns become rows.
\subsection{The main result}
The combinatorics of FES is encoded in the following result.
\begin{theorem} \label{HWweight} For $g=q/r$ and $N \equiv 1 \pmod{r}$ , the number of microstates \emph{\eqref{HWbis}} can be written as
\be \label{Wbis}
W_g(K,N)=\sum_{{\l \, \vdash N}}w_g(\l) \frac{\ell(\l)!}{k_1! \, k_2! \cdots k_N!} \binom{K}{\ell(\l)}, \ee  where the sum runs over partitions of $N$, and \be  w_g(\l) = \label{HWw} \frac{ \displaystyle \binom{(N-1)/r}{\ell(\l)-k_1}}{ \displaystyle \binom{\ell(\l)}{k_1}}  \prod_{j=0}^{r-q} \binom{r-q}{j}^{k_{j+1}} H[r-q+1 - \ell(\l^*)]; \ee the function $H$ being the Heaviside step function $H[0]=1)$. 
\end{theorem}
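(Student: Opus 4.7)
My plan is to reduce the theorem to a single coefficient extraction in a generating function. Applying Vandermonde's convolution to the LHS \eqref{HWbis},
\[
W_g(K,N) \;=\; \binom{K+(r-q)P}{rP+1} \;=\; \sum_{s=0}^{N} \binom{K}{s}\,\binom{(r-q)P}{N-s}.
\]
On the RHS of the claimed identity, I would partition the sum over $\lambda\vdash N$ according to the value $s=\ell(\lambda)$. Because $\binom{K}{0},\binom{K}{1},\ldots$ form a basis of polynomials in $K$, matching coefficients of $\binom{K}{s}$ reduces the theorem to the $K$-independent identity
\[
\binom{(r-q)P}{N-s} \;=\; \sum_{\substack{\lambda\vdash N\\ \ell(\lambda)=s}} w_g(\lambda)\,\frac{s!}{k_1!\,k_2!\cdots k_N!}\qquad (0\le s\le N).
\]

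Substituting \eqref{HWw} and writing $\binom{s}{k_1}=s!/\bigl(k_1!(s-k_1)!\bigr)$, the factors $s!$ and $k_1!$ telescope and each summand collapses to
\[
\binom{P}{m}\,\frac{m!}{k_2!\cdots k_{r-q+1}!}\,\prod_{i=1}^{r-q}\binom{r-q}{i}^{k_{i+1}},
\]
where $m:=s-k_1=k_2+\cdots+k_{r-q+1}$ counts the multiply-occupied states and the Heaviside factor restricts to $\lambda_1\le r-q+1$. Reindexing $t_i:=k_{i+1}$ for $i=1,\ldots,r-q$, the constraint $\sum_{i\ge1}i\,k_i=N$ becomes $\sum_i i\,t_i=N-s$, and by the multinomial theorem the inner sum over tuples with a prescribed $m$ equals
\[
\sum_{\substack{\sum t_i=m\\ \sum i\,t_i=N-s}} \frac{m!}{\prod_i t_i!}\,\prod_i\binom{r-q}{i}^{t_i} \;=\; [z^{N-s}]\bigl((1+z)^{r-q}-1\bigr)^m.
\]

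Summing over $m\ge0$ with outer weight $\binom{P}{m}$ and invoking the binomial theorem,
\[
\sum_{m\ge0}\binom{P}{m}\bigl((1+z)^{r-q}-1\bigr)^m \;=\; (1+z)^{(r-q)P},
\]
so that $[z^{N-s}]$ delivers exactly $\binom{(r-q)P}{N-s}$, closing the reduction. The main obstacle I foresee is the bookkeeping that ensures the generating-function sum, which runs over all non-negative tuples $(t_i)$ without the side-condition $k_1=s-m\ge 0$, does not introduce spurious terms. This is handled by noting that $\binom{P}{m}$ vanishes for $m>P$, and that any nonempty partition sum on the RHS requires $s\ge(rP+1)/(r-q+1)\ge P$ (the last inequality holding because $q\ge 1$); thus any tuple with $m>s$ automatically has $m>P$ and is annihilated. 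Outside this valid range of $s$, both sides vanish identically, and the Heaviside cutoff $\ell(\lambda^*)\le r-q+1$ is similarly encoded automatically through the convention $\binom{r-q}{i}=0$ for $i>r-q$, so no extra restriction need be imposed by hand.
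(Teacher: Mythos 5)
Your proposal is correct and is essentially the paper's own argument in a different order: the telescoping of $\binom{P}{m}/\binom{s}{k_1}$ against the multinomial factor, the identification of the inner sum as $[z^{N-s}]\bigl((1+z)^{r-q}\bigr)^P=\binom{(r-q)P}{N-s}$ (the paper's Lemma, identity \eqref{HF}, proved there by the same $(1+t)^{kP}=((1+t)^k)^P$ expansion), and Vandermonde's convolution all appear in both, the only difference being that you apply Vandermonde first and match coefficients of $\binom{K}{s}$ while the paper sums against $\binom{K}{N-s}$ at the end. Your explicit bookkeeping of the $k_1=s-m\ge 0$ side-condition is a point the paper passes over silently, and your resolution of it is sound.
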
 \noindent For the sake of readability, we report the proof in Subsection~\ref{proof}.

Displayed in the form~\eqref{Wbis}, $W_g(K,N)$ may be interpreted as follows. A  configuration $\l$ being fixed, the factor $\frac{\ell(\l)!}{k_1! \, k_2! \cdots k_N!} \binom{K}{\ell(\l)}$  counts the ways to choose $\ell(\l)$ non-vacant states out of $K$ ones and arrange $k_i$ states with $i$ particles ($i=1,\ldots ,N$) among them. The result is then weighted by a configuration-dependent function $w_g(\l)$. Due to the expression~\eqref{HWw}, the sum in Eq.~\eqref{Wbis} runs actually over restricted partitions of $N$.

In the case with $g=1/2$, the weight~\eqref{HWw} reads, for $\l=(1^{k_1}2^{k_2}) \vdash N$, as
\[w_{1/2}(\l)=\binom{(N-1)/2}{k_2}\binom{k_1+k_2}{k_1}^{-1},\] which is exactly the formula derived by Chaturvedi and Srinivasan in their microscopic interpretation of semion statistics~\cite{chat}.

Obviously, the one-configuration weight $w_g$ characterizes the studied occupancy model. In fact, the form~\eqref{Wbis} is generic to any statistics based on ``Balls-in-Boxes'' models \emph{with distinguishable boxes}. For instance, the number of microstates for the Gentile intermediate statistics (GS)~\cite{gent} can be cast in the form~\eqref{Wbis}. Indeed, it is well known that the partition function is~\cite{Poly,fahssi} \be \label{Z} \mathcal{Z}(z)= \sum_{N=0}^{\infty} W_G(K,N) z^N=\left(1+z+\ldots + z^G\right)^K,\ee  where $z$ is the fugacity and $G$ is the order of Gentile statistics~\footnote{Throughout this letter, we set GS$_G$ to designate GS of oreder $G$}. Using the multinomial theorem to expand the power in~\eqref{Z} and extracting the coefficient of $z^N$, we obtain the identity
\be \label{G} W_G(K,N)= \sum_{\{k_i\}} \frac{K!}{k_1! \cdots k_G!(K-k_1- \cdots -k_G)!},\ee where the sum runs over all $G$-tuples $(k_1, \ldots , k_G)$ subject to $k_1+2k_2 + \ldots + G k_G =N $, i.e. over restricted partitions of $N$. Thus $W_G(K,N)$ can be written as~\eqref{Wbis} with a weight given by:
\be \label{wG} w_G(\l) = H[G-\ell(\l^*)]. \ee In the table below, we summarize our calculations of the weight $w(\l)$ for the most known statistics; see also~\cite{fahssi}. The so-called  $\gamma$-statistics, introduced as an ansatz in~\cite{AS,Poly}, interpolates between FD ($\gamma=1$), BE ($\gamma=-1$) and the classical Maxwell-Boltzmann (MB) statistics ($\gamma=0$). 
{\small \begin{center}
	\begin{tabular}{ccccc}
		\hline
 Statistics &&&&$w(\l)$ \\
  \hline\hline
  BE &&&& 1 \\
  FD &&&& $1$ if $\l = (1^N)$, 0 otherwise \\
  MB &&&& $\displaystyle (1!^{k_1} 2!^{k_2}\cdots N!^{k_N})^{-1}$ \\
   FES$_{q/r}$ &&&& Eq.~\eqref{HWw} \\
  GS$_G$ &&&& Eq.~\eqref{wG} \\
  $\gamma$-statistics &&&& $ \gamma^N \binom{\gamma^{-1}}{1}^{k_1} \binom{\gamma^{-1}}{2}^{k_2} \cdots \binom{\gamma^{-1}}{N}^{k_N}$ \\ \hline
\end{tabular}\end{center}}
\subsection{Interpretation of the weight $w_g$}
From the expression of $w_g$, we underline the following features:   \begin{enumerate}
	\item[(1)] the weights $w_g (\l)$ are fractional and non-negative definite,
	\item[(2)] the weights $w_g(\l)$ depend only upon $P \coloneqq (N-1)/r$ and the difference $r-q$,
	\item[(3)] the allowed occupation number for a single-state does not exceed $r-q+1$ and \emph{not} $1/g$ whenever $q \neq 1$. We recall, however, that the average occupation number $\n_g(\epsilon)$ does not exceed $1/g \leq r-q+1$,
	\item[(4)] Since the binomial coefficient $\binom{P}{\ell(\l)-k_1}$ in~\eqref{HWw} vanishes if $P < \sum_{i=2}^m k_i$ , the corresponding configuration does not contribute to the total weight.
\end{enumerate} The last observation is crucial. It stipulates that a necessary condition for permissible configurations is that the number of states occupied by two particles or more is less than or equal to $P$, that is, the Ferrers diagram of $(2^{k_{2}} \ldots (r-q+1)^{k_{r-q+1}})$, extracted from $\l$, fits inside the rectangle $[P \times (r-q+1)]$.

Let us incorporate the above-formulated rules as follows: 
\medskip

\noindent \textbf{Generalized Exclusion Principle}.
	\emph{A configuration of shape $ \l \vdash n$ is admissible if and only if the following constraints are fulfilled:}
\begin{enumerate}[leftmargin=5cm]
	\item[$C_1$\; \emph{:}] \quad $\ell(\l) \leq K$ \; \quad (\emph{by definition}),
	\item[$C_2$\; \emph{:}] \quad  $\ell(\l^*) \leq r-q+1$  \quad (\emph{at most $r-q+1$ particles per state}),
	\item[$C_3$\; \emph{:}]  \quad $\displaystyle \sum_{i=2}^m k_i \leq \frac{N-1}{r} \leq  \frac{K-1}{q}$ \quad ($w_g(\l) \neq 0$ \emph{and} $d_N \geq 1$). \end{enumerate}

Therefore, the exclusion operates not only on the ``microscopic'' level (condition $C_2$), but also on the ``macroscopic'' level (condition $C_3$). To illustrate, we implement this in two specific examples:

 $\bullet$ Let $g=1/3$. Here the maximal allowed occupancy of a state is $3$ and  \[w_{1/3}(\l)= \binom{(N-1)/3}{k_2+k_3}\binom{k_1+k_2+k_3}{k_1}^{-1} 2^{k_2}, \] for $\l=(1^{k_1}2^{k_2}3^{k_3}) \vdash N$. This formula was obtained by Murthy and Shankar using an exactly solvable model~\cite{neg}. For an example, take, say, $N=10$. By the constraint $C_2$, 14 configurations may contribute (depending on $K \geq 4$), among which the configurations $(1^2 2^4)$, $(2^5)$, $(1 2^3 3)$ and $(2^2 3^2)$  are forbidden by the constraint $C_3$:
\begin{center}\qquad \quad  \includegraphics[width=6cm]{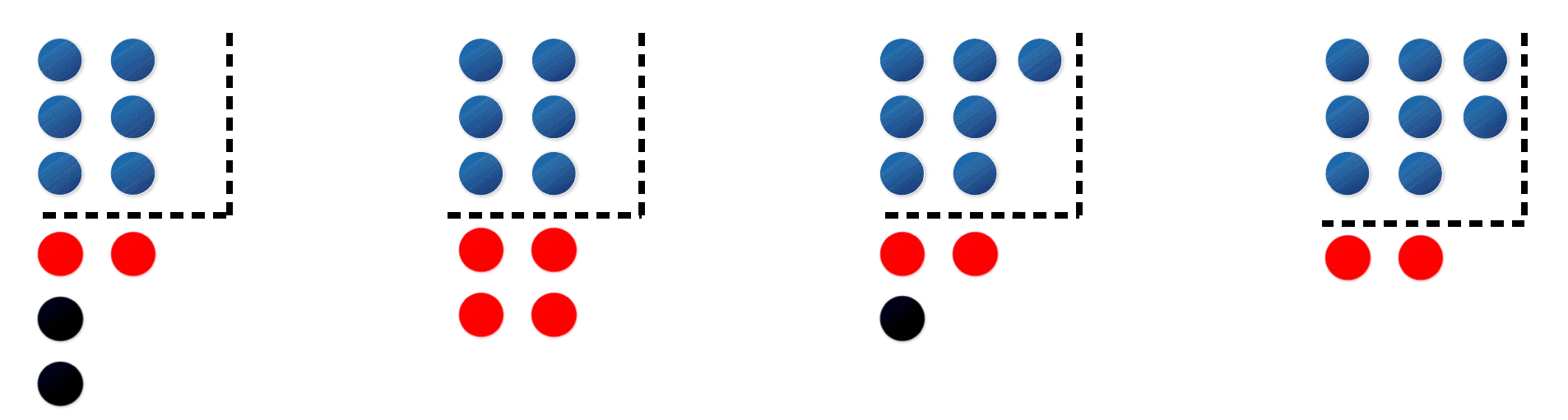}\end{center}

$\bullet$ Let $g=3/5$ and $N=16$. Here the maximal allowed occupancy is again $3$. Among the 231 partitions of 16, only 10 may contribute to the total weight: $(1^{16})$, $(1^{14} 2)$, $(1^{12} 2^2)$, $(1^{13} 3)$, $(1^{10} 2^3)$, $(1^{11} 2\,3)$, $(1^9 2^2 3)$, $(1^{10} 3^2)$, $(1^8 2\,3^2)$, $(1^7 3^3)$, each of which contributes only if its length is less than or equal to $K \geq 10$.

\begin{proposition} \label{prop3}
	For $N \leq K$, the number of permissible configurations is given by: \be \label{conf} \binom{(N-1)/r +r-q}{r-q}. \ee
\end{proposition}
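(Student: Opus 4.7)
The plan is to show that under the hypothesis $N \leq K$, conditions $C_1$ and the second half of $C_3$ are automatic, so that admissibility reduces to the two constraints $C_2$ and $\sum_{i\geq 2}k_i \leq P$, where $P=(N-1)/r$. Then a direct stars-and-bars count finishes the job.

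First I would dispose of the automatic conditions. Since $\ell(\lambda)\leq \sum_i i k_i = N \leq K$, the constraint $C_1$ is vacuous. Similarly, $N\leq K$ together with $q\leq r$ gives $qP \leq rP = N-1 \leq K-1$, so $d_N\geq 1$ holds for free, leaving only the bound $\sum_{i=2}^{r-q+1}k_i \leq P$ from $C_3$ (combined with $C_2$, which forbids parts exceeding $r-q+1$).

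Next I would parameterize. Condition $C_2$ kills the multiplicities $k_j$ for $j>r-q+1$, so an admissible partition is determined by the tuple $(k_2,k_3,\ldots,k_{r-q+1})$, with $k_1 = N - \sum_{j=2}^{r-q+1} j k_j$. The key observation, and the only thing to verify, is that the nonnegativity $k_1\geq 0$ is implied by the remaining constraint. Indeed,
\[
\sum_{j=2}^{r-q+1} j k_j \;\leq\; (r-q+1)\sum_{j=2}^{r-q+1} k_j \;\leq\; (r-q+1)P \;\leq\; rP \;<\; N,
\]
the last two inequalities using $q\geq 1$ and $N=rP+1$. Thus admissible configurations are in bijection with $(r-q)$-tuples of nonnegative integers $(k_2,\ldots,k_{r-q+1})$ satisfying $k_2+\cdots+k_{r-q+1}\leq P$.

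Finally, introducing the slack variable $k_0 := P - \sum_{j=2}^{r-q+1} k_j \geq 0$, this set is in bijection with the weak compositions of $P$ into $r-q+1$ nonnegative parts, whose cardinality is $\binom{P+r-q}{r-q}$ by stars-and-bars; this is precisely \eqref{conf}. There is no real obstacle: the only subtle point is verifying that $k_1\geq 0$ comes for free, which boils down to the inequality $(r-q+1)P \leq rP < N$ that holds exactly because $q\geq 1$.
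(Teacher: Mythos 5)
Your proof is correct and follows essentially the same route as the paper: both reduce admissibility under $N\leq K$ to the single constraint $k_2+\cdots+k_{r-q+1}\leq (N-1)/r$ and then count the nonnegative integer solutions by stars-and-bars. The one difference is that you explicitly verify that $k_1\geq 0$ holds automatically (via $(r-q+1)P\leq rP<N$), a detail the paper's proof passes over in silence; this is a worthwhile addition but not a change of method.
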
 
\begin{proof} Clearly, when $N \leq K$ the condition $C_1$ and the inequality in the right of the constraint $C_3$ are satisfied. Thus, a configuration $\l$ is likely if and only if the inequality in the left of the condition $C_3$ holds true. Therefore, the number of allowed configurations is the number of solutions of $k_2 + k_3 + \cdots + k_m \leq (N-1)/r$ in nonnegative integers. The result follows from the known fact that the number of solutions of $x_1 + x_2 + \cdots + x_k \leq p$ is given by $\binom{p+k}{k}$ (cf.~\cite[p.103]{vanLint}). \end{proof}
By way of comparison, the exact exclusion rules for GS$_G$ are, in addition to $C_1$, $\ell(\l^*) \leq G$ and $N \leq G K$. Thus, the number of permitted configurations is simply that of the partitions of $N$ with no more than $K$ parts; no part exceeding $G$. This number is the coefficient of $q^N$ in the Gaussian polynomial $\left[ \!{K+G \atop K} \!\right]_q$~\cite[Chap.3]{Andr}. When $N\leq K$, this reduces to the number of partitions with largest part not exceeding $G$. We also emphasize that if $G=r-q+1$, then $W_G(K,N)$  majorizes $W_g(K,N)$ since the exclusion principle of FES is more restrictive.

It is worth noting that, in view of the constraints $C_2$ and $C_3$, we can distinguish infinitely many families of FES systems according to $P=(N-1)/r$ and the difference $ r-q $. Indeed, representing an $N$-particle system fulfilling FES$_g$ by the pair $(N,g=q/r)$, two systems $(N,g=q/r)$ and $(N',g')$ are subject to the same exclusion rules if there exist an integer $j >0$ not a multiple of $r-q$ such that
\be   g'=\frac{j}{r-q+j}, \quad \hbox{and} \quad
\frac{N'-1}{r-q+j}=\frac{N-1}{r}.\ee
The semions, for example, belong to the family with $g=j/(j+1)$, the \emph{semionic family}. Clearly, the Bose and Fermi statistics are recovered in the limits $j=0$ and $j \to \infty$ respectively.
\subsection{Proof of Theorem~\ref{HWweight}}
\label{proof}	
To prove Theorem~\ref{HWweight}, we need the following identity:
\begin{lemma} Let $P$, $n$ and $k$ be positive integers. Then
	\be \label{HF} \binom{k P}{n}= \sum_{\{l_i\}} \frac{P!}{l_1! \cdots l_k!(P-l_1- \cdots -l_k)!}
	\prod_{i=1}^k \binom{k}{i}^{l_i}, \ee where the sum runs over all $k$-tuples $(l_1, \ldots , l_k)$ subject to the constraint $l_1+2l_2 + \ldots +k l_k =n $. \end{lemma}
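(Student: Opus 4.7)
The plan is to prove the identity \eqref{HF} by a double-counting (or rather, double-expansion) argument applied to $(1+x)^{kP}$, viewed first as a single power and second as the $P$-th power of $(1+x)^k$.

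First I would write
\[
(1+x)^{kP} = \bigl((1+x)^k\bigr)^P = \Biggl(1 + \sum_{i=1}^{k} \binom{k}{i} x^{i}\Biggr)^{P},
\]
and apply the multinomial theorem to the right-hand side. This yields
\[
(1+x)^{kP} = \sum_{\substack{l_0,l_1,\ldots,l_k \geq 0 \\ l_0 + l_1 + \cdots + l_k = P}} \frac{P!}{l_0!\, l_1! \cdots l_k!}\, \prod_{i=1}^{k} \binom{k}{i}^{l_i} x^{\sum_{i=1}^k i\, l_i},
\]
where the factor $\binom{k}{0}^{l_0} = 1$ contributes nothing. Extracting the coefficient of $x^n$ from each side (with $l_0 = P - l_1 - \cdots - l_k$) gives \eqref{HF} immediately.

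Alternatively, I could give a bijective interpretation: partition a ground set of $kP$ elements into $P$ blocks of size $k$, and count $n$-subsets by classifying each block according to how many of its $k$ elements are selected. If $l_i$ denotes the number of blocks contributing exactly $i$ elements, then $\sum_{i=1}^k i\, l_i = n$, the blocks are allocated among the $k+1$ ``types'' in $\frac{P!}{l_0! \cdots l_k!}$ ways (with $l_0 = P - \sum_{i=1}^k l_i$), and within each block of type $i$ the $i$ chosen elements may be picked in $\binom{k}{i}$ ways, contributing the product $\prod_i \binom{k}{i}^{l_i}$. Summing over admissible tuples reproduces the right-hand side of \eqref{HF}.

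There is essentially no obstacle here; the only care needed is to handle the ``null'' index $l_0$ consistently (so that the multinomial coefficient has the correct denominator factor $(P-l_1-\cdots-l_k)!$) and to confirm that the constraint $l_1 + 2l_2 + \cdots + k l_k = n$ is exactly what selects the coefficient of $x^n$. Both expansions are standard, so the identity follows at once from comparing coefficients.
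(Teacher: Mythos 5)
Your first argument is exactly the paper's proof: expand $(1+x)^{kP}$ once by the binomial theorem and once as $\bigl((1+x)^k\bigr)^P$ via the multinomial theorem, then equate coefficients of $x^n$; the handling of $l_0$ matches as well. The additional bijective reading (blocks of size $k$ classified by how many elements are selected) is a nice combinatorial gloss not in the paper, but the core argument is the same and correct.
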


\begin{proof} We shall use the technique of generating function to prove the identity~\eqref{HF} (see~\cite{fahssi}). Let $t$ be an indeterminate. On one hand, we have by application of the binomial theorem  \be \label{ex1} (1+t)^{kP}=\sum_{n=0}^{kP}\binom{kP}{n} t^n,\ee and, on the other hand,  by the well-known multinomial theorem:  \bea \nonumber (1+t)^{kP}&=&\left((1+t)^k\right)^P = \left(\sum_{i=0}^k \binom{k}{i}t^i\right)^P = \sum_{l_0 + l_1 + \cdots l_k=P} \frac{P!}{l_0! \, l_1! \, \cdots l_k!}\prod_{i=0}^k \left( \binom{k}{i}t^i \right)^{l_i} \\
	\label{ex2} &=& \sum_{(l_1 , \ldots , l_k)} \left(\frac{P!}{ l_1! \, \cdots l_k! \, (P-l_1-l_2-\cdots l_k)! }\prod_{i=0}^k \binom{k}{i}^{l_i}\right) \; t^{l_1+2l_2+\cdots +k l_k}. \eea   The identity~\eqref{HF} follows by equating the coefficients of $t^n$ in the two expansions~\eqref{ex1} and~\eqref{ex2}.\end{proof}  \begin{proof}[Proof of Theorem~\ref{HWweight}]  Inserting the weight $w_g(\l)$, the RHS of \eqref{Wbis} can be displayed as  \[ \sum_{{\l \vdash N \atop \ell(\l^*) \leq r-q+1}} \left(
	\frac{P!}{k_2! \cdots k_{r-q+1}!(P-k_2-\cdots-k_{r-q+1})!} \prod_{i=1}^{r-q}
	{{r-q}\choose i}^{k_{i+1}} \right) {{K}\choose \ell(\l)}, \] where $P=(N-1)/r$. Taking into account that $\ell(\l)=\sum_{i=1}^{r-q}k_{i}=N-\sum_{i=1}^{r-q}ik_{i+1}$
	and putting $s=\sum_{i=1}^{r-q}ik_{i+1}$ (the integer $s$ ranges
	from 0 to $(r-q)P$ since $k_{r-q+1} \leq P$), we
	re-express the last formula as a double sum:
	\be \label{A2} \sum_{s=0}^{(r-q)P} \left( \sum_{\sum_{i=1}^{r-q}ik_{i+1}=s}\frac{P!}{k_2! \cdots k_{r-q+1}!(P-k_2-\cdots-k_{r-q+1})!} \prod_{i=1}^{r-q} {{r-q}\choose i}^{k_{i+1}}\right) {{K}\choose N-s}.\ee  Now we make the change of summation indices $l_i=k_{i+1}$ to write the inner sum as the RHS of formula~\eqref{HF}:
	\be \label{A3} \sum_{\sum_{i=1}^{r-q}i l_{i}=s}\frac{P!}{l_1! \cdots
		l_{r-q}!(P-l_1-\cdots-l_{r-q})!} \prod_{i=1}^{r-q} {{r-q}\choose i}^{l_{i}} = {{(r-q)P}\choose s}.
	\ee
	We deduce finally that the RHS of Eq.~\eqref{Wbis} reads \be \label{Vander} \sum_{s=0}^{(r-q)P}{{(r-q)P}\choose s}{{K}\choose N-s}={{K+(r-q)P}\choose N}=W_g(K,N), \ee where, to obtain the last equality, we employed the well-known Vandermonde's formula for binomial coefficients~\cite{vander}.  \end{proof} 
\begin{remark}
For $g>1$ ($r<q$), one may follow the proof above to check that $W_g(K,N)$ can as well be formally written in the form~\eqref{Wbis}, but the constraint of maximal occupancy became relaxed and the weights \emph{inevitably negative} for some configurations. Indeed, in this case, the weights are not positive definite since $\binom{r-q}{i}<0$ for odd $i$.\end{remark}
\section{The state-occupancy distributions in the thermodynamic limit} \label{s3} In Balls-in-Boxes models, a problem of interest is the statistics of occupation patterns, for example, the probability distributions of occupied/vacant cells or those accommodating a fixed number of balls, etc. In this section, we comparatively investigate these questions and more for FES and GS, and give a probabilistic application of our main result to the statistics of occupancies in the thermodynamic limit (i.e. $ N, K \to \infty $ and $ N / K $ is held bounded).
\subsection{Two probability measures} The combinatorial expression~\eqref{Wbis} suggest the following probability measures on the set of partitions of $N$, \be \label{prob} \p_\alpha(\l) = \frac{w_\alpha(\l )}{W_\alpha(K,N)}\, \frac{\ell(\l)!}{k_1! \, k_2! \cdots k_N!} \binom{K}{\ell(\l)},\ee the Haldane measure ($\alpha \equiv g$) and the Gentile measure ($\alpha \equiv G$). We regard $\p_\alpha (\l)$ as the probability of configuration $\l$. Actually, the probability so defined is the joint distribution of the random variables $k_2,k_3, \ldots,k_m$ ($m=r-q+1$ or $G$). For the semionic family and GS$_2$, we have \bea \label{probg}  \p_{g=1/2} (\l) &=& \dfrac{1}{ \binom{P+K}{N}} \binom{P}{k_2} \binom{K}{N-k_2}, \\ \label{probG} \p_{G=2} (\l) &=&  \frac{1}{W_{G=2}(K,N)} \binom{N-k_2}{k_2} \binom{K}{N-k_2}  , \eea respectively.

Interestingly, the distribution~\eqref{probg} shows that $k_2$ (or $\ell(\l)=N-k_2)$ is a usual \emph{hypergeometric} random variable of parameters $P$, $P+K$ and $N$, that is, $\p_{1/2}(\l)$ describes the probability of getting $k_2$ successes in $N$ draws without replacement where the sample population is $P+K$. Each draw is either success or failure and the population consists of exactly $P$ successes~\cite{feller}. More generally, we show that the number of occupied states follows a hypergeometric law with parameters $K$, $(r-q)P+K$ and $N$; see Eqs. \eqref{A2}, \eqref{A3} and \eqref{Vander}. On the other hand, the distribution~\eqref{probG} relative to GS$_{2}$ is unusual.

Consider now the average number of states with $i$ particles $$ \langle k_i \rangle_\alpha =  \sum_{\l \vdash n}k_i  \p_\alpha (\l), \quad (\alpha \equiv g,G ),$$ and set \be \kappa_i^{(\alpha)} \coloneqq \lim\limits_{{N,K \to \infty \atop N/K=r\r}} \frac{1}{K} \langle k_i \rangle_\alpha,\ee for the proportion of states accommodating $i$ particles. The variable $\r$ controls the thermodynamic limit. We also define the normalized variance:
\be \nu_i^{(\alpha)} \coloneqq \lim\limits_{{N,K \to \infty \atop N/K=r\r}}  \frac{1}{K}\left(\langle k_i ^2 \rangle_\alpha -\langle k_i \rangle_\alpha^2\right). \ee These limits exist for both GS and FES as we will show.

Using the expressions of the expectation value and the variance of hypergeometric distributions~\cite{feller}, we find the thermodynamic limit of the normalized mean and variance of the number of occupied states: 
\be \lim\limits_{{N,K \to \infty \atop N/K=r\r}} \frac{1}{K}\langle \ell(\l) \rangle_g = \frac{r \r}{1+(r-q) \r} ,\ee and \be \lim\limits_{{N,K \to \infty \atop N/K=r\r}} \frac{1}{K}\left(\langle \ell(\l) ^2 \rangle_g -\langle \ell(\l) \rangle_g^2 \right) = \frac{ r (r-q)(1-q \r )\r ^2}{(1+
	(r-q)\r)^3}, \ee where $\r = P/K$. Note that $\r$ ranges in the interval $(0,1/q)$ due to the constraint $C_3$. For the semion family $g=(j-1)/j$, we find
\bea \nonumber \kappa_1^{(g)}(\r) = \frac{j \r (1-\r)}{1+\r},&& \quad \kappa_2^{(g)} (\r) =\frac{j \r^2}{1+\r}, \\ \nonumber  \kappa_0^{(g)} (\r)= \frac{1-(j-1)\r}{1+\r},\qquad && \quad \nu_2^{(g)} (\r )= \frac{j(1-(j-1)\r ) \r ^2}{(1+\r)^3}.\eea
For $r-q\geq 2$, we do not have such explicit expressions. As for the Gentile statistics, we show the following
\begin{proposition} \label{Gent} In the thermodynamic limit, the mean number of states with $i$ particles is given by \be \label{asym} \kappa_i^{(G)} (\r) =  \frac{\big(1-x(\r)\big)x(\r)^i}{1-x(\r)^{G+1}},\ee for $i=0,1, \ldots , G$, where $\r=N/(GK)$ and $x(\r)$ is the (unique) positive solution of \be \label{eq} G \r = \frac{t \left(1-(G+1) \, t^G + G \, t^{G+1}\right)}{(1-t)
		\left(1-t^{G+1}\right)}.\ee Moreover, the following duality relation  holds true \be \label{dua} \kappa_i^{(G)}(1-\r)=\kappa_{G-i}^{(G)}(\r).\ee \end{proposition}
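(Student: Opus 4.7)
The plan is to exploit the factorization of the Gentile grand partition function \eqref{Z}, namely $\mathcal{Z}(z)=\phi(z)^{K}$ with $\phi(z):=1+z+\cdots+z^{G}=(1-z^{G+1})/(1-z)$. Grand-canonically, this says that the $K$ distinguishable states are independent, each occupied by $i$ particles with probability $\pi_i(z)=z^i/\phi(z)$. By symmetry among the boxes, the canonical expectation reads
\[
\langle k_i\rangle_G \;=\; K\cdot\mathbb{P}_G(\text{state }1\text{ contains }i\text{ particles})\;=\;K\,\frac{W_G(K-1,N-i)}{W_G(K,N)},
\]
and by \eqref{Z} each weight is a coefficient of a power of $\phi$: $W_G(L,M)=[z^{M}]\phi(z)^{L}$.

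Next I would estimate these two coefficients by the classical saddle-point / local central limit theorem applied to sums of i.i.d.\ variables on $\{0,1,\ldots,G\}$ with distribution $\pi(\,\cdot\,;x)$. Choosing the saddle point $x=x(\rho)>0$ so that the per-state mean matches the prescribed density,
\[
\frac{x\,\phi'(x)}{\phi(x)}\;=\;\sum_{i=0}^{G}i\,\pi_i(x)\;=\;\frac{N}{K}\;=\;G\rho,
\]
and substituting $\sum_{i=0}^{G}i z^i=z\bigl(1-(G+1)z^{G}+G z^{G+1}\bigr)/(1-z)^{2}$, this rewrites precisely as \eqref{eq}. The saddle-point estimate yields $[z^{M}]\phi(z)^{L}\sim \phi(x)^{L}x^{-M}/\sqrt{2\pi L\,\sigma^{2}(x)}$, and after taking the ratio the Gaussian prefactors cancel to leading order, delivering
\[
\kappa_i^{(G)}(\rho)\;=\;\lim_{K,N\to\infty}\frac{\langle k_i\rangle_G}{K}\;=\;\frac{x(\rho)^{i}}{\phi(x(\rho))}\;=\;\frac{(1-x(\rho))\,x(\rho)^{i}}{1-x(\rho)^{G+1}},
\]
which is exactly \eqref{asym}. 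Uniqueness of the positive root of \eqref{eq} is immediate: the function $F(t):=t\phi'(t)/\phi(t)=\sum_{i}i\,\pi_i(t)$ is the mean of $\pi(\,\cdot\,;t)$, so its $t$-derivative equals $\sigma^{2}(t)/t>0$, and $F$ increases strictly from $0$ (as $t\to 0^{+}$) to $G$ (as $t\to\infty$).

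The duality \eqref{dua} I would read off from the palindromic identity $\phi(1/z)=z^{-G}\phi(z)$, which gives the pair of symmetries $\pi_i(1/z)=\pi_{G-i}(z)$ and $F(1/z)=G-F(z)$. The second implies $x(1-\rho)=1/x(\rho)$; substituting into the first yields
\[
\kappa_i^{(G)}(1-\rho)\;=\;\pi_i\bigl(1/x(\rho)\bigr)\;=\;\pi_{G-i}\bigl(x(\rho)\bigr)\;=\;\kappa_{G-i}^{(G)}(\rho).
\]

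The principal technical hurdle is the quantitative comparison of the two saddle points: one must verify that the saddle of $\phi^{K}$ at density $N/K$ and the saddle of $\phi^{K-1}$ at density $(N-i)/(K-1)$ differ by only $O(1/K)$, and that the Gaussian prefactor contributes only a multiplicative $1+O(1/K)$ correction in the ratio. Since $\pi(\,\cdot\,;x)$ is supported on the bounded set $\{0,\ldots,G\}$ with non-degenerate variance for all $x\in(0,\infty)$, both points reduce to textbook local-CLT estimates; the substantive content of the proof really lies in the algebraic identification of the saddle equation with \eqref{eq} and in exploiting the $z\leftrightarrow 1/z$ symmetry of $\phi$ for the duality.
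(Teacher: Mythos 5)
Your argument is correct and follows essentially the same route as the paper: the exact identity $\langle k_i\rangle_G = K\,W_G(K-1,N-i)/W_G(K,N)$, followed by saddle-point asymptotics of $W_G(K,N)=[z^N]\phi(z)^K$ with the saddle chosen by $x\phi'(x)/\phi(x)=G\rho$, which is precisely~\eqref{eq}. The paper obtains the key identity by an index shift inside the multinomial sum and simply cites the uniform asymptotic estimate for $W_G$ from an earlier reference, whereas you rederive that estimate via the local CLT, verify explicitly that the saddle equation coincides with~\eqref{eq}, and prove uniqueness of the positive root by monotonicity of the mean $F(t)=t\phi'(t)/\phi(t)$ --- details the paper asserts or outsources, so your version is the more self-contained one. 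The only genuine divergence is the duality~\eqref{dua}: the paper deduces it from the finite-size symmetry $W_G(K,N)=W_G(K,GK-N)$, while you argue directly at the level of the limit, using $\phi(1/z)=z^{-G}\phi(z)$ to obtain $x(1-\rho)=1/x(\rho)$ and $\pi_i(1/z)=\pi_{G-i}(z)$. The two mechanisms are equivalent (the palindromicity of $\phi$ is what underlies the finite-size symmetry in the first place), but your route bypasses the combinatorial identity and makes the origin of the duality transparent. The technical caveat you flag --- that the saddles of $\phi^{K}$ at density $N/K$ and of $\phi^{K-1}$ at density $(N-i)/(K-1)$ differ only by $O(1/K)$ and that the Gaussian prefactors cancel in the ratio --- is exactly the point hidden in the paper's appeal to uniformity of the cited estimate, and your reduction of it to a bounded-support, non-degenerate-variance local CLT is sound.
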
 
\begin{proof} The mean number of states with $i$ particles under the Gentile measure is given by
		\bea \nonumber \langle k_i \rangle_G &=& \frac{1}{W_G(K,N)} \sum_{\sum_j jk_j=N} \frac{ k_i K!}{k_1! \cdots k_G!(K-k_1- \cdots -k_G)!}\\
		\nonumber	&=& \frac{1}{W_G(K,N)} \sum_{\sum_j jk_j=N}  \frac{ K (K-1)!}{k_1! \cdots (k_i-1)! \cdots k_G!(K-k_1- \cdots -k_G)!} \\
		\nonumber	&=& \frac{K}{W_G(K,N)} \sum_{\sum_j jr_j=N-i}  \frac{  (K-1)!}{r_1! \cdots r_i! \cdots r_G!(K-1-r_1- \cdots -r_G)!} \\
		&=& K \dfrac{W_G(K-1,N-i)}{W_G(K,N)}, \quad \hbox{for all $K$ and $N$.} \eea 
		To compute the thermodynamic limit, we use the following asymptotic estimate~\cite{fahssi} :  \[ W_G(K,N) \sim \frac{(1+x+x^2+ \cdots + x^G)^K}{x^{N+1}\sqrt{2\pi C K }} , \quad  \hbox{uniformly as $N,K \to \infty$, and $N/K$ finite,}\]  where $x=x(\r)$ is the positive real solution of Eq.~\eqref{eq} and $C$ is a constant depending on $x$ only. Whence,
		\[\k^{(G)}_i(\r)= \lim\limits_{{N,K \to \infty \atop N/K=G\r}}\dfrac{W_G(K-1,N-i)}{W_G(K,N)}= \frac{x(\r)^i}{1+x(\r)+x(\r)^2 + \cdots + x(\r)^G}= \frac{\big(1-x(\r)\big) x(\r)^i}{1-x(\r)^{G+1}}, \] as desired.
		
		The duality~\eqref{dua} follows immediately from the symmetry relation: $W_G(K,N)=W_G(K,GK-N)$~\cite{fahssi2}.\end{proof} Consequently, the normalized average number of occupied states under the Gentile measure is \[ \sum_{i=1}^G \kappa_{i}^{(G)}=\frac{x(\r) \left(x(\r)^G-1\right)}{x(\r)^{G+1}-1}.\]
	Closed explicit expressions are possible for $G \leq 4$. For instance,
\bea \nonumber \kappa_1^{(G=2)} (\r) &=& -\frac{1}{3}+
{1 \over 3} \sqrt{1+12 \r -12 \r ^2
}, \\ \nonumber \kappa_2^{(G=2)} (\r)&=& \frac{1}{6}+\r-\frac{1}{6} \sqrt{1+12 \r -12 \r ^2}\\
\nonumber \kappa_0^{(G=2)} (\r) &=& {7 \over 6}- \r -\frac{1}{6} \sqrt{1+12 \r -12 \r ^2} . \eea
\begin{figure}
	\centering
	\includegraphics[width=13cm]{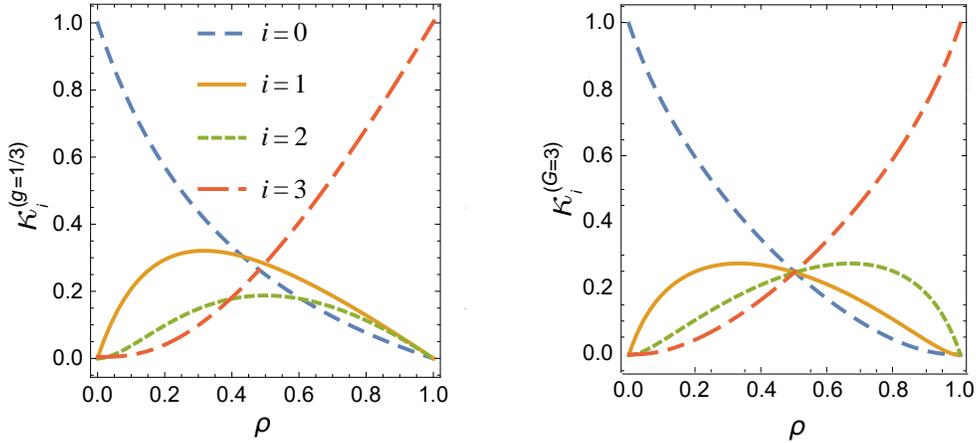}
	\caption{(Color online) The mean numbers $\kappa_0$, $\kappa_1$, $\kappa_2$, $\kappa_3$ vs $\r$, for FES$_{1/3}$ and GS$_3$.}\label{comp3} \end{figure}
Comparing $\kappa_i^{(\alpha)}$ for semion statistics and GS$_2$, we see in Fig.~\ref{kappa2} that qualitatively both models display a very close behavior. Also, the fluctuations of $k_2$ under both measures are comparable, although the distribution is slightly more dispersed under the Haldane one.
\begin{figure}
	\centering
	\includegraphics[width=13cm]{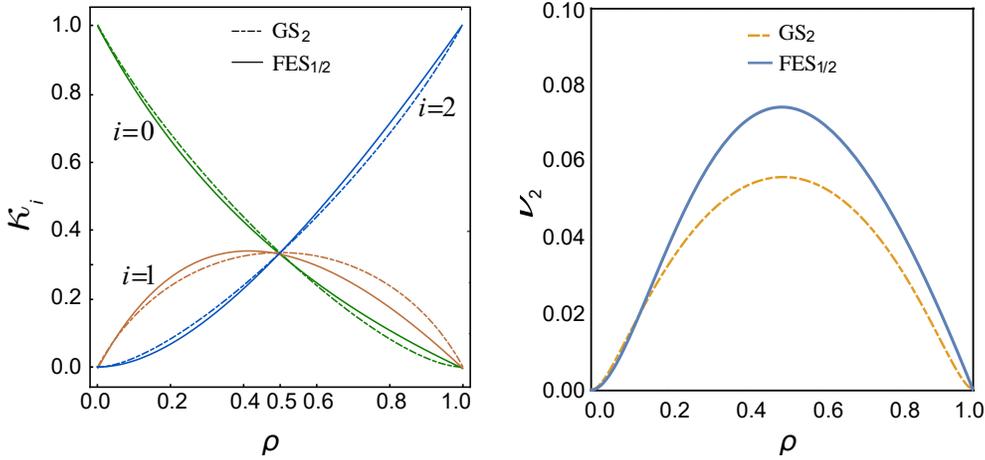}\\
	\caption{(Color online) The normalized mean numbers $\kappa_i$ ($i=0,1,2$) and The normalized variance $\nu_2$ vs. $\r$,  for semions and GS$_2$.} \label{kappa2}
\end{figure}
Particularly at half filling ($\r=1/2$), we highlight the following points: (i) For semions statistics and GS$_2$ all the $\kappa_i^{(\alpha)}$ coincide. This feature, though systematic for GS, is not shared by FES$_g$ with $r-q\geq 2$  (Fig.\ref{comp3}). (ii) It is easy to see that the unique positive solution of Eq.~\eqref{eq} for $\r=1/2$ is $x(1/2)=1$, and, consequently, $\kappa_i^{(G)} (1/2) = 1/(G+1)$ for all $i=0, \ldots G$, illustrating the uniform distribution of occupancies under Gentile measure at half-filling, (iii) If $g=1/G$, the mean numbers of occupied states are the same ($=G/(G+1)$) under the two measures (Fig.~\ref{ell3}). 

\section{Conclusion}\label{s4}
In the present work, we have shown that the generalized exclusion principle for FES cannot be fully understood without an exact combinatorics of~\eqref{HW}. The author believes however that this point deserve further elucidation, and any interpretation should shed more light on the subject. In fact, in the Haldane's seminal paper, the interpolating formula~\eqref{HW} was not derived from a concrete counting procedure as is the case for conventional statistics or GS. In Ref.~\cite{fahssi}, the author presented several interpretations of so-called polynomial coefficients, or extended binomial coefficients, given by~\eqref{G}, namely as number of restricted integer compositions, as score in drawing balls, as counting certain directed lattice paths or spin chain models, etc. It would be instructive to seek similar interpretations for $W_{g}(K,N)$. For example, in terms of generalized integer compositions, it has been observed~\cite[Sequence A078812]{OEIS} that $W_{g=1/2}(K,N)$ is the number of ways of writing $K$ as the sum of $P+1$ strictly positive integers when there are 1 kind of part 1, 2 kinds of part 2: $2_1$ and $2_2$, and so on~\footnote{See the comment of Emeric Deutsch on the sequence A078812~\cite{OEIS}.}. For example, $W_{1/2}(4,2) = 10$ since there are 10 such compositions of 4: $(1,3_1)$, $(1,3_2)$, $(1,3_3)$, $(3_1,1)$, $(3_2,1)$, $(3_3,1)$, $(2_1,2_1)$, $(2_1,2_2)$, $(2_2,2_1)$ and $(2_2,2_2)$. The main goal is to find a one-to-one correspondence between the set of such compositions and the set occupancy configurations for semions. 
\begin{figure}
	\centering
	\includegraphics[width=6cm]{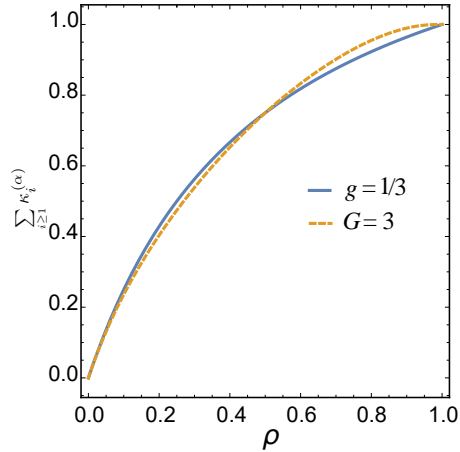}\\
	\caption{(Color online) The normalized mean number of occupied states vs. $\r$ in the thermodynamic limit,  for FES$_{1/3}$ and GS$_3$.} \label{ell3}
\end{figure}



\begin{thebibliography}{99}
\bibitem{Hald}
F.D. Haldane, {\it Phys. Rev. Lett}. {\bf 67} 937 (1991).

\bibitem{ref2}
A. D. de Veigy and S. Ouvry, Phys. Rev. Lett. {\bf 72}, 600 (1994).

\bibitem{ref3}
Y.-S. Wu, Phys. Rev. Lett. {\bf 73}, 922 (1994).

\bibitem{ref4}
S. B. Isakov, Phys. Rev. Lett. {\bf 73}, 2150 (1994).

\bibitem{ref5}
M. V. N. Murthy and R. Shankar, Phys. Rev. Lett. {\bf 73}, 3331 (1994).

\bibitem{ref6}
M. V. N. Murthy and R. Shankar (2013), \emph{Exclusion Statistics: From Pauli to Haldane}, Report of The Institute of Mathematical Sciences, Chennai, India, URL: \href{http://www.imsc.res.in/xmlui/handle/123456789/334}{http://www.imsc.res.in/xmlui/handle/123456789/334}

\bibitem{CS}
N. R. Cooper and S. H. Simon, \emph{Phys. Rev. Lett.} 114, 106802 (2015), 	\arxiv{1411.4168} [cond-mat.quant-gas].

\bibitem{anghel}
D. V. Anghel, Phys. Scr. 2012, 014079.	\arxiv{1207.6534}

\bibitem{anghel2}
D. V. Anghel, G. A. Nemnes, and F. Gulminelli, \emph{Phys. Rev. E} 88:042150 (2013). \arxiv{1303.5493} [cond-mat.stat-mech].

\bibitem{Hu}
Y.~Hu, S.~D.~Stirling, and Y-S. Wu. \emph{Phys. Rev.} B \textbf{89}, 115133 (2014)

\bibitem{yang}
 C. N. Yang and C. P. Yang, \emph{J. Math. Phys.} \textbf{10}, 1115 (1969).

\bibitem{Poly}
A.P. Polychronakos, {\it Phys. Lett}. B {\bf 365}, 202 (1996)~; {\emph Generalized statistics in one dimension}, Les Houches
lectures, Summer 1998. \arxiv{hep-th/9902157}.

\bibitem{neg}
M. V. N. Murthy, R. Shankar, Phys. Rev B {\bf 60} (9), 1999. \arxiv{cond-mat/9903278}.

\bibitem{gent}
G. Gentile, {\it Nuovo Cimento} {\bf 17} 493 (1940) ; {\it Nuovo Cimento} {\bf 19} 109 (1942).

\bibitem{ZNC}
Z.N.C Ha, {\it Phys. Rev. Lett}. {\bf 73} 1574 (1994). {\it Erratum, ibid}. {\bf 74} 620 (1995).

\bibitem{minpol}
J.A. Minahan and A.P. Polychronakos, {\it Phys. Rev.} B {\bf 50} 4236 (1994).

\bibitem{vanLint}
J.H. van Lint and R.M. Wilson, \emph{A Course in Combinatorics}, Cambridge University Press 1992.

\bibitem{Nayak}
C. Nayak and F. Wilczek, {\it Phys. Rev. Lett}. {\bf 73}, 2740 (1994)

\bibitem{chat} S. Chaturvedi and V. Srinivasan, \emph{Phys. Rev. Lett}. \textbf{78}, 4316 (1997).

\bibitem{AS} R.~Acharya, P.~N.~Swamy, \emph{J. Phys. A: Math. Gen}. \textbf{27} (1994) 7247.\\
A.V. Ilinskaia, K.N. Ilinski and J.M.F. Gunn, \textit{Nucl. Phys.} B, \textbf{458} (1996) 562.

\bibitem{fahssi}
N.-E.~Fahssi, \emph{Polynomial triangles revisited}. \arxiv{1202.0228} [math.CO] (July 2012).

\bibitem{fahssi2} N-E.~Fahssi,
Some Identities Involving Polynomial Coefficients,
\emph{Fibonacci Quart.} \textbf{54} (2016), no. 2, 125--136.

\bibitem{Andr}G. Andrews, \textit{The theory of partitions}, Cambridge University Press, 1984.

\bibitem{feller} W. Feller, \emph{An introduction to Probability Theory and its Applications} (vol. 1, 3rd ed.), New York: Jhon Wiley.

\bibitem{vander} R.~Askey, \emph{Orthogonal polynomials and special functions}, Regional Conference Series in Applied Mathematics, 21, Philadelphia, 1975, PA: SIAM, pp. viii+110.

\bibitem{OEIS} Sloane, N. J. A. \emph{The On-Line Encyclopedia of Integer Sequences}. Published electronically at \href{http://oeis.org/}{http://oeis.org/}
\end{thebibliography}
\end{document}